 \newtheorem{thm}{Theorem}[section]
 \newtheorem{prop}[thm]{Proposition}
\definecolor{deepblue}{rgb}{0,0,0.5}
\definecolor{deepred}{rgb}{0.6,0,0}
\definecolor{deepgreen}{rgb}{0,0.5,0}
\definecolor{verydarkGray}{RGB}{30,30,30}
\definecolor{darkGray}{RGB}{80,80,80}
\definecolor{midGray}{RGB}{160,160,160}  
\definecolor{midBlue}{RGB}{120,120,160}  
\definecolor{verylightgray}{RGB}{210,210,210}
\definecolor{verylightred}{rgb}{0.8,0.1,0.1}
\definecolor{gray}{rgb}{0.4,0.4,0.4}
\definecolor{darkblue}{rgb}{0.0,0.0,0.6}
\definecolor{cyan}{rgb}{0.0,0.6,0.6}
\definecolor{forestGreen}{RGB}{0, 153, 76}
\definecolor{folderbg}{RGB}{124,166,198}
\definecolor{folderborder}{RGB}{110,144,169}
\def\Size{4pt}
\tikzset{
  folder/.pic={
    \filldraw[draw=folderborder,top color=folderbg!50,bottom color=folderbg]
      (-1.05*\Size,0.2\Size+5pt) rectangle ++(.75*\Size,-0.2\Size-5pt);  
    \filldraw[draw=folderborder,top color=folderbg!50,bottom color=folderbg]
      (-1.15*\Size,-\Size) rectangle (1.15*\Size,\Size);
  }
}
\title{Efficient representation and manipulation of quadratic surfaces using Geometric Algebras}
\author{St\'ephane Breuils\inst{1}, Vincent Nozick\inst{1}, Laurent Fuchs\inst{2} and Akihiro Sugimoto\inst{3}}
\institute{
Laboratoire d'Informatique Gaspard-Monge, Equipe A3SI,\\ UMR 8049, Universit\'e Paris-Est Marne-la-Vall\'ee, France\\
France\\
\email{firstName.name@u-pem.fr}
\and
XLIM-ASALI,~UMR 7252, \\ Universit\'e de Poitiers, Poitiers, France\\
\email{Laurent.Fuchs@univ-poitiers.fr}
\and
National Institute of Informatics, \\Tokyo 101-8430, Japan,\\
\email{sugimoto@nii.ac.jp}
% \texttt{http://users/\homedir iekeland/web/welcome.html}
% \and
}
\date{March 2019}
\begin{document}
\mainmatter
\titlerunning{GA \& quadratic surfaces }
\maketitle

\begin{abstract}
Quadratic surfaces gain more and more attention among the Geometric Algebra community and some frameworks were proposed in order to represent, transform, and intersect these quadratic surfaces. As far as the authors know, none of these frameworks support all the operations required to completely handle these surfaces. Some frameworks do not allow the construction of quadratic surfaces from control points when others do not allow to transform these quadratic surfaces. However, if we consider all the frameworks together, then all the required operations over quadratic are covered. This paper presents a unification of these frameworks that enables to represent any quadratic surfaces either using control points or from the coefficients of its implicit form. The proposed approach also allows to transform any quadratic surfaces and to compute their intersection and to easily extract some geometric properties.
% From Stephane: Note that the number of words is limited to 150. (136 at the moment) 
\keywords{Geometric Algebra, Quadratic surfaces, Conformal Geometric Algebras}
\end{abstract}

\section{Introduction}
Geometric algebra provides convenient and intuitive tools to represent, transform, and intersect geometric objects. Deeply explored by physicists, it has been used in quantum mechanics and electromagnetism~\cite{Hestenes1990,gregory2017elastic} as well as in classical mechanics~\cite{hestenes2012new}. Geometric algebra has also found some interesting applications in geographic data manipulations~\cite{Luo2017,Zhu2018}. Among them, geometric algebra is used within the computer graphics community. More precisely, it is used not only in basis geometric primitive manipulations~\cite{vince2008geometric} but also in complex illumination processes as in~\cite{papaefthymioureal} where spherical harmonics are substituted by geometric algebra entities. Finally, in data and image analysis, we can find the usefulness of geometric algebra in mathematical morphology~\cite{dorst1993analytical} and in neural networking~\cite{buchholz2007optimal,hitzer2013geometric}.
In the geometric algebra community, quadratic surfaces gain more and more attention, and some frameworks have been proposed in order to represent, transform, and intersect these quadratic surfaces. 

There exist three main approaches to deal with quadratic surfaces. The first one, introduced in~\cite{easter20162CGA}, is called double conformal geometric algebra of~$\mathbb{G}_{6,2}$. It is capable of representing quadratics surfaces from the coefficients of their implicit form.  The second one is double perspective geometric algebra of $\mathbb{G}_{4,4}$ whose definition was firstly introduced in~\cite{Goldman2015}. It has been further developed in~\cite{Goldman2017}. This approach is based on a duplication of $\mathbb{R}^3$ and also represents quadratic surfaces from the coefficients of their implicit form, as bivectors. 
To our best knowledge, it can not construct general quadrics from control points.
The third one was introduced in~\cite{Breuils2018} and is denoted as quadric conformal geometric algebra (QCGA). 
%This framework  As far as the authors know, it is the first framework that 
QCGA allows us to define general quadrics from $9$ control points, and to represent the objects in low dimensional subspaces of the algebra. QCGA is capable of constructing quadratic surfaces either using control points or implicit equations as $1$-vector. As far as we know, QCGA does not yet allow all geometric transformations of quadrics surfaces.  

% different parts, parts are in my thesis manuscript
% publicly defended the 17th of december.

In order to enhance usefulness of Geometric Algebra for geometry and computer graphics community, a new framework that allows to represent and manipulate quadratic surfaces has to be developed.  This is the main purpose of this paper.

All the examples and computations are based upon the efficient geometric algebra library generator Garamon~\cite{breuilsagaramon}. The code of this library generator is available online\footnote{git clone https://git.renater.fr/garamon.git}.

\subsection{Contributions}
We propose a new approach that unifies the three above framework. Using this approach, we show that it is possible to represent quadratic surfaces either using control points or its implicit coefficients but also transform these quadratic surfaces using versors. Computation of the tangent planes to a quadratic surfaces is also defined. We show that it is also possible to intersect quadratic surfaces in this new framework. Finally, this approach also enables to define also some quartic surfaces. 

\section{Notations:}
Following the state-of-the-art usages of~\cite{LeoBook} and~\cite{perwassGeometric}, upper-case bold letters denote blades (blade~$\mathbf{A}$) whose grade is higher than $1$. Multivectors and $k$-vectors are denoted with upper-case non-bold letters (multivector~$A$). Lower-case bold letters refer to vectors and lower-case non-bold to multivector coordinates. The $k$-grade part of a multivector $A$ is denoted by $\langle A \rangle_{k}$. Finally, The vector space dimension is denoted by $2^d$, where~$d$ is the number of basis blades $\mathbf{e}_{i}$ of grade~$1$.

\section{Computational efficiency of models of Quadric surfaces with Geometric Algebras}
\subsection{Complexity estimation model}
\label{sec:quadricComparisonModel}
~\\In order to compare the operations for these different frameworks, we need a computational model. This requires to be able to determine  the complexity of operations in each framework. Using the complexity of the operations explained in the last part of this manuscript is well suited when one wants to compare different methods to compute the products. Instead, our complexity estimation is made through two simplifying assumptions. 

~\\First, let us consider the outer product between one homogeneous multivector whose number of components is $u$ and another homogeneous multivector whose number of components is $v$, $u,v\in \mathbb{N}$. We assume that an upper bound to the number of required products is at most $uv$ products, as shown in the definition of the outer product.

~\\The equation of the inner product is our base for the second assumption. Furthermore, we need to use this formula for inner products between $1-$vector and $2-$vector as well as inner product between two $1-$vector. The first multivector has $u$ non-zero components, and the second has $v$ non-zero components. Then the inner product between two $1-$vectors will result in $uv$ products. Whereas, the inner product between $1-$vectors and $2$-vectors requires two inner products for each pair of components of the two multivectors. Thus, this inner product requires at most $2uv$ products.   
%One might argue that some outer products are useless to compute and thus should be taken into account. Our answer to that is that we are  
These models will be used for determining the complexity of each operation. Let us now explain in more details these models, we start with DCGA.

\subsection{Models of quadratic surfaces with Geometric Algebra}
This section presents the main Geometric Algebra models to represent and manipulate quadratic surfaces. Furthermore, we aim at determining the most efficient Geometric Algebra model for each of the main operations required in computer graphics. The purpose is to propose a Geometric Algebra model that allows to efficiently handle quadrics.
~\\The considered operations over the surfaces are: 
\begin{itemize}
\item checking if a point lies in a quadratic surface,
\item intersecting quadratic surface and line,
\item computing the normal vector (and tangent plane) of a surface at a given point.
\end{itemize}
One of the applications of such operations is to compute precise visualisations using ray-tracer~\cite{glassner1989introduction}. 

~\\A first framework to handle quadratic surfaces was introduced thanks to the pioneering work of Zamora~\cite{Zamora2014}. This framework constructs a quadratic surface from control points in $\mathbb{G}_{6,3}$. In this model, an axis-aligned quadratic surface $\mathbf{Q}$ can be defined as:
\begin{equation}
\mathbf{Q} = \mathbf{x}_1 \wedge \mathbf{x}_2 \wedge \mathbf{x}_3 \wedge \mathbf{x}_4 \wedge \mathbf{x}_5 \wedge \mathbf{x}_6
\end{equation}  
The major drawback of this model is that it supports only axis-aligned quadratic surfaces. Due to this fact, we will not further detail this model.

There exist three main Geometric Algebra frameworks to manipulate general quadratic surfaces. First, DCGA (Double Conformal Geometric Algebra) with $\mathbb{G}_{8,2}$ defined by Easter and Hitzer~\cite{easter20162CGA}. Second, a framework of~$\mathbb{G}_{4,4}$ as firstly introduced by Parkin~\cite{Parkin} and developed further by Du et~al.~\cite{Goldman2017}. The last one is our contribution, introduced in~\cite{Breuils2018} and is a model of $\mathbb{G}_{9,6}$.

\subsubsection{DCGA of $\mathbb{G}_{8,2}$}
DCGA was presented by Hitzer and Easter~\cite{easter20162CGA} and aims at having entities representing both quartic surfaces and quadratic surfaces.
~\\\textbf{Basis and metric}~\\
In more details, the DCGA $\mathbb{G}_{8,2}$ is defined over a $10$-dimensional vector space. The base vectors of the space are basically divided into two groups: $ \{ \mathbf{e}_{o1},\mathbf{e}_{1}, \mathbf{e}_{2}, \mathbf{e}_{3}, \mathbf{e}_{\infty 1}\} $, corresponding to the CGA vectors defined in Chapter 1, and a copy of this basis $ \{ \mathbf{e}_{o2},\mathbf{e}_{4}, \mathbf{e}_{5}, \mathbf{e}_{6}, \mathbf{e}_{\infty 2}\} $. The inner products between them are defined in Table~\ref{table:metricDCGA}. Note that we highlighted in grey the two models of CGA that are included in DCGA.

\begin{table}[ht]
\caption{Inner product between DCGA basis vectors.}
$$
\begin{array}{c|rrrrrrrrrr}
      & \mathbf{e}_{o1} & \mathbf{e}_1 & \mathbf{e}_2 & \mathbf{e}_3 & \mathbf{e}_{\infty 1} & \mathbf{e}_{o2}& \mathbf{e}_{4} & \mathbf{e}_{5} & \mathbf{e}_{6} & \mathbf{e}_{\infty 2} \\ \hline 
 
\mathbf{e}_{o1} & \cellcolor{verylightgray} \textcolor{gray}{0} & \cellcolor{verylightgray}\textcolor{gray}{0} & \cellcolor{verylightgray}\textcolor{gray}{0} & \cellcolor{verylightgray}\textcolor{gray}{0} & \cellcolor{verylightgray} -1 & \textcolor{gray}{0} & \textcolor{gray}{0} & \textcolor{gray}{0} & \textcolor{gray}{0} & \textcolor{gray}{0} \\

\mathbf{e}_1 & \cellcolor{verylightgray}\textcolor{gray}{0} & \cellcolor{verylightgray}1 & \cellcolor{verylightgray}\textcolor{gray}{0} & \cellcolor{verylightgray}\textcolor{gray}{0} & \cellcolor{verylightgray}\textcolor{gray}{0} & \textcolor{gray}{0} & \textcolor{gray}{0} & \textcolor{gray}{0} & \textcolor{gray}{0} & \textcolor{gray}{0} \\

\mathbf{e}_2 & \cellcolor{verylightgray}\textcolor{gray}{0} & \cellcolor{verylightgray}\textcolor{gray}{0} & \cellcolor{verylightgray}1 & \cellcolor{verylightgray}\textcolor{gray}{0} & \cellcolor{verylightgray}\textcolor{gray}{0} & \textcolor{gray}{0} & \textcolor{gray}{0} & \textcolor{gray}{0} & \textcolor{gray}{0} & \textcolor{gray}{0} \\

\mathbf{e}_3 & \cellcolor{verylightgray}\textcolor{gray}{0} & \cellcolor{verylightgray}\textcolor{gray}{0} & \cellcolor{verylightgray}\textcolor{gray}{0} & \cellcolor{verylightgray}1 & \cellcolor{verylightgray}\textcolor{gray}{0} & \textcolor{gray}{0} & \textcolor{gray}{0} & \textcolor{gray}{0} & \textcolor{gray}{0} & \textcolor{gray}{0} \\

\mathbf{e}_{\infty 1}  & \cellcolor{verylightgray} -1 & \cellcolor{verylightgray}\textcolor{gray}{0} & \cellcolor{verylightgray}\textcolor{gray}{0} & \cellcolor{verylightgray}\textcolor{gray}{0} & \cellcolor{verylightgray}\textcolor{gray}{0} & \textcolor{gray}{0} & \textcolor{gray}{0} & \textcolor{gray}{0} & \textcolor{gray}{0} & \textcolor{gray}{0} \\
 
\mathbf{e}_{o2} & \textcolor{gray}{0} & \textcolor{gray}{0} & \textcolor{gray}{0} & \textcolor{gray}{0} & \textcolor{gray}{0} & \cellcolor{verylightgray}\textcolor{gray}{0} & \cellcolor{verylightgray}\textcolor{gray}{0} & \cellcolor{verylightgray}\textcolor{gray}{0} & \cellcolor{verylightgray}\textcolor{gray}{0} & \cellcolor{verylightgray} -1 \\
 
\mathbf{e}_4 & \textcolor{gray}{0} & \textcolor{gray}{0} & \textcolor{gray}{0} & \textcolor{gray}{0} & \textcolor{gray}{0} & \cellcolor{verylightgray}\textcolor{gray}{0} & \cellcolor{verylightgray}1 & \cellcolor{verylightgray}\textcolor{gray}{0} & \cellcolor{verylightgray}\textcolor{gray}{0} & \cellcolor{verylightgray}\textcolor{gray}{0} \\
 
\mathbf{e}_5 & \textcolor{gray}{0} & \textcolor{gray}{0} & \textcolor{gray}{0} & \textcolor{gray}{0} & \textcolor{gray}{0} & \cellcolor{verylightgray}\textcolor{gray}{0} & \cellcolor{verylightgray}\textcolor{gray}{0} & \cellcolor{verylightgray}1 & \cellcolor{verylightgray}\textcolor{gray}{0} & \cellcolor{verylightgray}\textcolor{gray}{0} \\
 
\mathbf{e}_6 & \textcolor{gray}{0} & \textcolor{gray}{0} & \textcolor{gray}{0} & \textcolor{gray}{0} & \textcolor{gray}{0} & \cellcolor{verylightgray}\textcolor{gray}{0} & \cellcolor{verylightgray}\textcolor{gray}{0} & \cellcolor{verylightgray}\textcolor{gray}{0} & \cellcolor{verylightgray}1 & \cellcolor{verylightgray}\textcolor{gray}{0} \\
 
\mathbf{e}_{\infty2} & \textcolor{gray}{0} & \textcolor{gray}{0} & \textcolor{gray}{0} & \textcolor{gray}{0} & \textcolor{gray}{0} & \cellcolor{verylightgray}-1 & \cellcolor{verylightgray}\textcolor{gray}{0} & \cellcolor{verylightgray}\textcolor{gray}{0} & \cellcolor{verylightgray}\textcolor{gray}{0} & \cellcolor{verylightgray}\textcolor{gray}{0} \\
\end{array} 
$$
\label{table:metricDCGA}
\end{table}

~\\ \textbf{Point of DCGA}~\\
A point of DCGA whose Euclidean coordinates are $(x,y,z)$ is defined as the outer product of two CGA points with coordinates $(x,y,z)$. By defining this two points $\mathbf{x}_1$ and $\mathbf{x}_2$ as:
\begin{equation}
\begin{array}{cc}
\mathbf{x}_1 &= \mathbf{e}_{o1} + x \mathbf{e}_{1} + y \mathbf{e}_{2} + z \mathbf{e}_{3} + \frac{1}{2} (x^2 + y^2 + z^2)\mathbf{e}_{\infty1} \\
\mathbf{x}_2 &= \mathbf{e}_{o2} + x \mathbf{e}_{4} + y \mathbf{e}_{5} + z \mathbf{e}_{6} + \frac{1}{2} (x^2 + y^2 + z^2)\mathbf{e}_{\infty2} \\
\end{array}
\end{equation}
The embedding of a DCGA point is defined as follows:
\begin{equation}
\mathbf{X} = \mathbf{x}_1 \wedge \mathbf{x}_2
\end{equation}
The development of this operation results in:
\begin{equation}
\begin{array}{cll}
\mathbf{X} &=&  \color{red}x^2 \mathbf{e}_{1} \wedge \mathbf{e}_{4} + y^2 \mathbf{e}_{2} \wedge \mathbf{e}_{5} + z^2 \mathbf{e}_{3} \wedge \mathbf{e}_{6}
+ xy (\mathbf{e}_{1} \wedge \mathbf{e}_{5} + \mathbf{e}_{2} \wedge \mathbf{e}_{4} ) \\
&&\color{red} + xz (\mathbf{e}_{1} \wedge \mathbf{e}_{6} + \mathbf{e}_{3} \wedge \mathbf{e}_{4}) + yz (\mathbf{e}_{2} \wedge \mathbf{e}_{6} + \mathbf{e}_{3} \wedge \mathbf{e}_{5} ) \\
&& \color{red} +  x (\mathbf{e}_{o1} \wedge \mathbf{e}_{4}+\mathbf{e}_{1} \wedge \mathbf{e}_{o2}) + y (\mathbf{e}_{o1} \wedge \mathbf{e}_{5}+\mathbf{e}_{2} \wedge \mathbf{e}_{o2})\\ &&{\color{red}+ z (\mathbf{e}_{o1} \wedge \mathbf{e}_{6} +\mathbf{e}_{3} \wedge \mathbf{e}_{o2}) + \mathbf{e}_{o1} \wedge \mathbf{e}_{o2}}\\
&& +  \frac{1}{2} (x^2 + y^2 + z^2) (\mathbf{e}_{o1} \wedge \mathbf{e}_{\infty2} + \mathbf{e}_{\infty1} \wedge \mathbf{e}_{o2}) \\
&&+\frac{1}{2} x (x^2 + y^2 + z^2) (\mathbf{e}_{1} \wedge \mathbf{e}_{\infty2} + \mathbf{e}_{\infty1} \wedge \mathbf{e}_{4}) \\
&&+\frac{1}{2} y (x^2 + y^2 + z^2) (\mathbf{e}_{2} \wedge \mathbf{e}_{\infty2} + \mathbf{e}_{\infty1} \wedge \mathbf{e}_{5})\\
&&+\frac{1}{2} z (x^2 + y^2 + z^2) (\mathbf{e}_{3} \wedge \mathbf{e}_{\infty2} + \mathbf{e}_{\infty1} \wedge \mathbf{e}_{6})\\
&& +\frac{1}{4}  (x^2 + y^2 + z^2)^2 \mathbf{e}_{\infty1} \wedge \mathbf{e}_{\infty2} 
\end{array}
\end{equation}
%where $\mathbf{f} \cdot \mathbf{g} = 0$.  
This high number of components is due to the fact that the representation of a point of DCGA was designed to non only define quadratic surfaces but also quartic surfaces. To illustrate this, we highlight the components that contribute to construct quadrics in {\color{red}red}. The other components are dedicated to the construction of quartics.
 
~\\ \textbf{Quadratic surfaces}~\\
A general quadratic surface merely consists in defining some operators that extract the components of $\mathbf{x}$. For a general quadric defined as:
\begin{equation}
\mathrm{a} x^2+\mathrm{b} y^2 +\mathrm{c}z^2+\mathrm{d}xy+\mathrm{e}yz+\mathrm{f}zx+\mathrm{g}x+\mathrm{h} y  + \mathrm{i}z + \mathrm{j} =0
\end{equation}
This means that $4$ operators are defined for the quadratic part:
\begin{equation}
\begin{array}{c@{}lc@{}l}
\mathbf{T}_{x^{2}} &= \mathbf{e}_4 \wedge \mathbf{e}_1 & \mathbf{T}_{y^{2}} &= \mathbf{e}_5 \wedge \mathbf{e}_2\\
\mathbf{T}_{z^{2}} &= \mathbf{e}_6 \wedge \mathbf{e}_3 & \mathbf{T}_{1} &= -\mathbf{e}_{\infty1} \wedge \mathbf{e}_{\infty2} \\
\end{array}
\end{equation}
along with the $3$ operators for the linear part:
\begin{equation}
\begin{array}{cl}
\mathbf{T}_{x} &= \frac{1}{2}  \Big( \mathbf{e}_1 \wedge \mathbf{e}_{\infty 2} + \mathbf{e}_{\infty 1} \wedge \mathbf{e}_4 \Big)\\
 \mathbf{T}_{y} &= \frac{1}{2}  \Big( \mathbf{e}_2 \wedge \mathbf{e}_{\infty 2} + \mathbf{e}_{\infty 1} \wedge \mathbf{e}_5 \Big) \\
\mathbf{T}_{z} &= \frac{1}{2}  \Big( \mathbf{e}_3 \wedge \mathbf{e}_{\infty 2} + \mathbf{e}_{\infty 1} \wedge \mathbf{e}_6 \Big) \\ 
\end{array}
\end{equation}
and $3$ operators for the cross terms:
\begin{equation}
\begin{array}{cl}
\mathbf{T}_{xy} &= \frac{1}{2} \Big(\mathbf{e}_5 \wedge \mathbf{e}_1 + \mathbf{e}_4 \wedge \mathbf{e}_2 \Big)\\
\mathbf{T}_{xz} &= \frac{1}{2} \Big(\mathbf{e}_6 \wedge \mathbf{e}_1 + \mathbf{e}_4 \wedge \mathbf{e}_3 \Big)\\
\mathbf{T}_{yz} &= \frac{1}{2} \Big(\mathbf{e}_5 \wedge \mathbf{e}_3 + \mathbf{e}_6 \wedge \mathbf{e}_2\Big)\\
\end{array}
\end{equation}
Then, for example,
\begin{equation}
\mathbf{T}_{yz} \cdot \mathbf{x} = yz
\end{equation}
Finally a general quadratic surface is defined as the bivector $\mathbf{Q}_{DCGA}$ with the following formula:
\begin{equation}
\mathbf{Q}_{DCGA} = \mathrm{a} \mathbf{T}_{x^2} +\mathrm{b} \mathbf{T}_{y^2} +\mathrm{c}\mathbf{T}_{z^2}+\mathrm{d}\mathbf{T}_{xy} + \mathrm{e}\mathbf{T}_{yz} +\mathrm{f}\mathbf{T}_{xz}+\mathrm{g}\mathbf{T}_{x} +\mathrm{h} \mathbf{T}_{y}  + \mathrm{i}\mathbf{T}_{z} + \mathrm{j}\mathbf{T}_{1}
\end{equation}
Finally we check that a point $\mathbf{x}$ is in a quadratic surface if and only if:
\begin{equation}
\mathbf{Q}_{DCGA} \cdot \mathbf{x} = 0
\end{equation}
DCGA not only supports the definition of general quadrics but also some quartic surfaces like Torus, cyclides (Dupin cyclides...).

~\\\textbf{Plane tangent to a quadratic surface}~\\
The tangent plane was defined using differential operators in DCGA. Let us consider a point $\mathbf{x}$ whose Euclidean coordinates are ($x,y,z$) and a DCGA quadric $\mathbf{Q}_{DCGA}$ defined as:
\begin{equation}
\mathbf{Q}_{DCGA} = \mathrm{a} \mathbf{T}_{x^2} +\mathrm{b} \mathbf{T}_{y^2} +\mathrm{c}\mathbf{T}_{z^2}+\mathrm{d}\mathbf{T}_{xy} + \mathrm{e}\mathbf{T}_{yz} +\mathrm{f}\mathbf{T}_{xz}+\mathrm{g}\mathbf{T}_{x} +\mathrm{h} \mathbf{T}_{y}  + \mathrm{i}\mathbf{T}_{z} + \mathrm{j}\mathbf{T}_{1}
\end{equation}
The differential operators along the axis are defined as:
\begin{equation}
\begin{array}{c@{}c@{}}
\mathbf{D}_{x} &= (\mathbf{e}_1 \wedge \mathbf{e}_{\infty 1} + \mathbf{e}_{4} \wedge \mathbf{e}_{\infty 2})\\ 
\mathbf{D}_{y} &= (\mathbf{e}_2 \wedge \mathbf{e}_{\infty 1} + \mathbf{e}_{5} \wedge \mathbf{e}_{\infty 2})\\ 
\mathbf{D}_{z} &= (\mathbf{e}_3 \wedge \mathbf{e}_{\infty 1} + \mathbf{e}_{6} \wedge \mathbf{e}_{\infty 2})\\ 
\end{array}
\end{equation}
Then, using the commutator product, noted as $\times$, the following properties hold:
\begin{equation}
\begin{array}{c@{}c@{}}
\mathbf{D}_{x} \times \mathbf{q}_{DCGA}  &= 2 \mathrm{a} \mathbf{T}_{x} + \mathrm{d} \mathbf{T}_{y} + \mathrm{e} \mathbf{T}_{z} + \mathrm{g} \mathbf{T}_{1}\\ 
\mathbf{D}_{y} \times \mathbf{q}_{DCGA}  &= 2 \mathrm{b} \mathbf{T}_{y} + \mathrm{d} \mathbf{T}_{x} + \mathrm{f} \mathbf{T}_{z} + \mathrm{h} \mathbf{T}_{1}\\ 
\mathbf{D}_{z} \times \mathbf{q}_{DCGA}  &= 2 \mathrm{c} \mathbf{T}_{z} + \mathrm{e} \mathbf{T}_{x} + \mathrm{f} \mathbf{T}_{y} + \mathrm{i} \mathbf{T}_{1}\\ 
\end{array}
\label{eq:differentialOperators}
\end{equation}
This latter formula defines the normal vector to the quadric surface at any point of the surface. It is computed using the normal vector $\mathbf{n}_1$ defined in the first copy CGA and $\mathbf{n}_2$ along the second copy of the CGA basis vectors. $\mathbf{n}_1$ at the considered point can be defined as follows:
\begin{equation}
\mathbf{n}_{1} = ((\mathbf{D}_{x} \times \mathbf{q}_{DCGA}) \cdot \mathbf{x}) \mathbf{e}_1 + ((\mathbf{D}_{y} \times \mathbf{q}_{DCGA}) \cdot \mathbf{x}) \mathbf{e}_2 + ((\mathbf{D}_{z} \times \mathbf{q}_{DCGA}) \cdot \mathbf{x}) \mathbf{e}_3 
\end{equation}
Similarly, the normal vector $\mathbf{n}_2$ is:
\begin{equation}
\mathbf{n}_{2} = ((\mathbf{D}_{x} \times \mathbf{q}_{DCGA}) \cdot \mathbf{x}) \mathbf{e}_4 + ((\mathbf{D}_{y} \times \mathbf{q}_{DCGA}) \cdot \mathbf{x}) \mathbf{e}_5 + ((\mathbf{D}_{z} \times \mathbf{q}_{DCGA}) \cdot \mathbf{x}) \mathbf{e}_6 
\end{equation} 
Now, the definition of the plane from normal vector is:
\begin{equation}
\mathbold{\Pi} = (\mathbf{n}_{1} + \mathrm{d} \mathbf{e}_{\infty1} ) \wedge (\mathbf{n}_{2} + \mathrm{d} \mathbf{e}_{\infty2} )
\end{equation}
where $\mathrm{d}$ represents the orthogonal distance between the plane and the origin.
Finally, the computation of the orthogonal distance can be simply performed as follows:
\begin{equation}
   \mathrm{d} = \mathbf{n}_1 \cdot \mathbf{x}_1
\end{equation}
Where $\mathbf{x}_1$ is the point used to form the DCGA point $\mathbf{x}$.

~\\\textbf{Quadric-line intersection}~\\
DCGA also supports the construction of the intersection $\mathbf{p}_p$ of a quadric $\mathbf{Q}_{DCGA}$ and a line $\mathbf{L}$. A line in DCGA can be defined as:
\begin{equation}
\mathbf{L} = \mathbf{l}_1 \wedge \mathbf{l}_2
\end{equation}
Where these two entities $\mathbf{l}_1$ and $\mathbf{l}_2$ can be expressed using the direction of the unit vector in CGA1 $\mathbf{\mathbf{d}_1}$ and CGA2 $\mathbf{d}_1$ and a point of this line expressed in CGA1 $\mathbf{x}_1$ and CGA2 $\mathbf{x}_2$ as:
\begin{equation}
\mathbf{l}_1 = \mathbf{d}_1 \mathbf{I}_{\epsilon}^{-1} - (\mathbf{x}_1\cdot (\mathbf{d}_1 \mathbf{I}_{\epsilon1}^{-1}) )
\end{equation}
and:
\begin{equation}
\mathbf{l}_2 = \mathbf{d}_2 \mathbf{I}_{\epsilon}^{-1} - (\mathbf{x}_2\cdot (\mathbf{d}_2 \mathbf{I}_{\epsilon1}^{-1}) )
\end{equation}
Both $\mathbf{d}_1 \mathbf{I}_{\epsilon1}$ and $\mathbf{d}_2 \mathbf{I}_{\epsilon1}$ are $2$-vectors, therefore, $\mathbf{L}$ is a $4$-vector.  Note that a line can similarly be obtained by the intersection of two DCGA planes as:
\begin{equation}
\mathbf{L} = \mathbold{\Pi}_{1} \wedge \mathbold{\Pi}_2
\end{equation}
Finally, the intersection is computed as:
\begin{equation}
\mathbf{P}_p = \mathbf{Q}_{DCGA} \wedge \mathbf{L}
\end{equation}
~\\ \textbf{Complexity of some major operations of DCGA}~\\
Let us first evaluate the computational cost of checking whether a point is on a quadric using the model of~\ref{sec:quadricComparisonModel}. $\mathbf{Q}_{DCGA}$ has a total of $10$ basis bivector components. For each basis bivector, at most $3$ inner products (bivector $\wedge$ bivector) are performed. Finally, the number of point component is $25$. Thus, the product $\mathbf{Q}_{DCGA} \cdot \mathbf{X}$  require $25 \times 3 \times 10=750$ products. 

~\\The cost of the computation of the tangent plane to a quadric corresponds to the cost of, first, the normal vector $\mathbf{n}_1$ and second the tangent plane. Firstly, the normal vector is defined as:
\begin{equation}
\mathbf{n}_{1} = ((\mathbf{D}_{x} \times \mathbf{Q}_{DCGA}) \cdot \mathbf{x}) \mathbf{e}_1 + ((\mathbf{D}_{y} \times \mathbf{Q}_{DCGA}) \cdot \mathbf{x}) \mathbf{e}_2 + ((\mathbf{D}_{z} \times \mathbf{Q}_{DCGA}) \cdot \mathbf{x}) \mathbf{e}_3 
\end{equation}
Equation~\eqref{eq:differentialOperators} defined $\mathbf{D}_{x},\mathbf{D}_{y},\mathbf{D}_{z}$, and the commutator product of these operators with the quadric results in a $7$-component bivector. Indeed, the extractions operators $\mathbf{T}_x,\mathbf{T}_y,\mathbf{T}_z$ are 2-components operator while $\mathbf{T}_1$ is a single component extraction operator. Each inner product with $\mathbf{X}$ then has a computational cost of $7 \times 25=175$ products. This latter computation is repeated for each axis thus this results in $175 \times 3 = 525$ products.

~\\Second, the tangent plane is obtained by using the normal vector and the orthogonal distance from the origin. The orthogonal distance is computed as:
\begin{equation}
   \mathrm{d} = \mathbf{n}_1 \cdot \mathbf{x}_1
\end{equation} 
Thus, this requires the computation of $3$ inner products. Finally, the tangent plane is the results of the outer product:
\begin{equation}
\mathbold{\Pi} = (\mathbf{n}_{1} + \mathrm{d} \mathbf{e}_{\infty1} ) \wedge (\mathbf{n}_{2} + \mathrm{d} \mathbf{e}_{\infty2} )
\end{equation}
Both $(\mathbf{n}_{1} + \mathrm{d} \mathbf{e}_{\infty1} ) $ and $(\mathbf{n}_{2} + \mathrm{d} \mathbf{e}_{\infty2} ) $ are $4$-components 1-vector. Thus, the computational cost of the outer product is $4 \times 4=16$. Hence, the total cost of the computation of the tangent plane is $525+16=541$ products.

~\\ The cost of the computation of the intersection between a quadric and a line consists in evaluating the cost of the outer product between a DCGA line $\mathbf{l}$ and the quadric of DCGA $\mathbf{q}_{DCGA}$. In the previous section, we defined a line $\mathbf{l}$ as the 4-vector entity obtained by the outer product of the planes as follows:
\begin{equation}
\mathbf{L} = \mathbold{\Pi}_{1} \wedge \mathbold{\Pi}_2
\end{equation}
A plane in DCGA is obtained as the outer product of two Conformal Geometric Algebra (CGA) planes whose number of components is $4$. The result of the outer product between the two CGA planes may have non-zero components along the following components:
\begin{equation}
\begin{array}{c@{}l}
&(\mathbf{e}_{14},\mathbf{e}_{15},\mathbf{e}_{16},\mathbf{e}_{1\infty 2},
 \mathbf{e}_{24},\mathbf{e}_{25},\mathbf{e}_{26},\mathbf{e}_{2\infty 2},
 \mathbf{e}_{34},\\
& \mathbf{e}_{35},\mathbf{e}_{36},\mathbf{e}_{3\infty 2},
  \mathbf{e}_{\infty 1 4},\mathbf{e}_{\infty 1 5},\mathbf{e}_{\infty 1 6},
  \mathbf{e}_{\infty1 4},\mathbf{e}_{\infty1 \infty2})
\end{array}
\end{equation}
Then, computing the outer product between two planes may have some results along the following basis quad-vectors:
\begin{equation}
\begin{array}{c@{}l}
(&\mathbf{e}_{1245},\mathbf{e}_{1246},\mathbf{e}_{124\infty 2},\mathbf{e}_{1256},\mathbf{e}_{125\infty 2},\mathbf{e}_{126\infty 2},\mathbf{e}_{1345},\mathbf{e}_{1346},\mathbf{e}_{134\infty 2},\mathbf{e}_{1356},\mathbf{e}_{135\infty2},\\
&\mathbf{e}_{136\infty2},\mathbf{e}_{1\infty145},\mathbf{e}_{1\infty146},\mathbf{e}_{1\infty 14\infty 2},\mathbf{e}_{1\infty 156},\mathbf{e}_{1\infty 1 5 \infty 2},\mathbf{e}_{1\infty 1 6 \infty 2},\mathbf{e}_{2345},\mathbf{e}_{2346},\mathbf{e}_{234\infty 2},\\
&\mathbf{e}_{2356},\mathbf{e}_{235\infty2},\mathbf{e}_{236\infty2},\mathbf{e}_{2\infty145},\mathbf{e}_{2\infty146},\mathbf{e}_{2\infty14\infty2},\mathbf{e}_{2\infty156},\mathbf{e}_{2\infty 1 5\infty2},\mathbf{e}_{2\infty 1 6 \infty 2},\mathbf{e}_{3\infty 1 4 5},\\
&\mathbf{e}_{3\infty 1 4  6},\mathbf{e}_{3\infty 1 4 \infty 2},\mathbf{e}_{3\infty 1 56},\mathbf{e}_{3\infty 1 5 \infty 2},\mathbf{e}_{3\infty 1 6 \infty 2}) 
\end{array}
\end{equation}
This latter $4$-vector has thus $36$ components. Then, the outer product between this quad-vector $\mathbf{L}$ and the quadratic surface can be performed as:
\begin{equation}
\mathbf{Q}_{DCGA} \wedge \mathbf{L}
\end{equation}
As the number of components of $\mathbf{Q}_{DCGA}$ is $25$ and the number of components of $\mathbf{L}$ is $36$. Then the cost of the outer product is $25 \times 36=900$ products. The following table summarises the computational cost of the three features computed so far for DCGA.
\begin{table}[ht]
\caption{Computational features in number of Geometric Algebra operations for DCGA}
\begin{center}
\renewcommand{\arraystretch}{1.3}
\begin{tabular}{|c|c|}
\hline
Feature & \textbf{DCGA} \\
\hline
point is on a quadric & $725$ \\
\hline
tangent plane & $541$ \\
\hline
quadric-line intersection & $900$ \\
\hline
\end{tabular}
\end{center}
\end{table}

\subsubsection{DPGA of $\mathbb{G}_{4,4}$}
DPGA was adapted from the approach of Parkin~\cite{Parkin} in $2012$ and firstly introduced in 2015 by Goldman and Mann~\cite{Goldman2015} and further developed by Du and Goldman and Mann~\cite{Goldman2017}.
~\\ \textbf{Basis and metric}~\\
DPGA $\mathbb{G}_{4,4}$ is defined over a 8-dimensional vector space. In a similar way to DCGA, the base vectors of the space are divided into two groups: $ \{ \mathbf{w}_0,\mathbf{w}_1,\mathbf{w}_2,\mathbf{w}_3\} $ (corresponding to the projective Geometric Algebra vectors), and a copy of this basis $ \{ \mathbf{w}^{*}_0,\mathbf{w}^{*}_1,\mathbf{w}^{*}_2,\mathbf{w}^{*}_3\} $ such that $\mathbf{w}_i \mathbf{w}^{*}_i=0.5 + \mathbf{w}_i \wedge \mathbf{w}^{*}_i$, $\forall i \in \{0,1,2,3\}$. To have more details, we show the inner products between any basis vectors in Table~\ref{table:metricDPGA}.
\begin{table}[!ht]
\caption{Inner product between DPGA basis vectors.}
$$
\begin{array}{c|cccccccc}
      & \mathbf{w_0} & \mathbf{w_1} & \mathbf{w_2} & \mathbf{w_3} & \mathbf{w^{*}_0} & \mathbf{w^{*}_1} & \mathbf{w^{*}_2} & \mathbf{w^{*}_3} \\ \hline 
 
\mathbf{w_0} & \textcolor{gray}{0} & \textcolor{gray}{0} & \textcolor{gray}{0} & \textcolor{gray}{0} & \cellcolor{verylightgray} 0.5 & \cellcolor{verylightgray}\textcolor{gray}{0} &  \cellcolor{verylightgray}\textcolor{gray}{0} & \cellcolor{verylightgray}\textcolor{gray}{0} \\

\mathbf{w_1} & \textcolor{gray}{0} & \textcolor{gray}0 & \textcolor{gray}{0} & \textcolor{gray}{0} & \cellcolor{verylightgray}\textcolor{gray}{0} & \cellcolor{verylightgray}0.5 & \cellcolor{verylightgray}\textcolor{gray}{0} & \cellcolor{verylightgray}\textcolor{gray}{0} \\

\mathbf{w_2} & \textcolor{gray}{0} & \textcolor{gray}{0} & \textcolor{gray}0 & \textcolor{gray}{0} & \cellcolor{verylightgray}\textcolor{gray}{0} & \cellcolor{verylightgray}\textcolor{gray}{0} & \cellcolor{verylightgray}0.5 & \cellcolor{verylightgray}\textcolor{gray}{0} \\

\mathbf{w_3} & \textcolor{gray}{0} & \textcolor{gray}{0} & \textcolor{gray}{0} & \textcolor{gray}0 & \cellcolor{verylightgray}\textcolor{gray}{0} & \cellcolor{verylightgray}\textcolor{gray}{0} & \cellcolor{verylightgray}\textcolor{gray}{0} & \cellcolor{verylightgray}0.5\\

\mathbf{w^{*}_0}  & \cellcolor{verylightgray} 0.5 & \cellcolor{verylightgray}\textcolor{gray}{0} & \cellcolor{verylightgray}\textcolor{gray}{0} & \cellcolor{verylightgray}\textcolor{gray}{0} & \textcolor{gray}{0} & \textcolor{gray}{0} & \textcolor{gray}{0} & \textcolor{gray}{0} \\
 
\mathbf{w^{*}_1} & \cellcolor{verylightgray}\textcolor{gray}{0} & \cellcolor{verylightgray}0.5 & \cellcolor{verylightgray}\textcolor{gray}{0} & \cellcolor{verylightgray}\textcolor{gray}{0} & \textcolor{gray}{0} & \textcolor{gray}0 & \textcolor{gray}0 & \textcolor{gray}0\\
 
\mathbf{w^{*}_2} & \cellcolor{verylightgray}\textcolor{gray}{0} & \cellcolor{verylightgray}\textcolor{gray}{0} & \cellcolor{verylightgray}0.5 & \cellcolor{verylightgray}\textcolor{gray}{0} & \textcolor{gray}{0} & \textcolor{gray}0 & \textcolor{gray}0 & \textcolor{gray}0 \\
 
\mathbf{w^{*}_3} & \cellcolor{verylightgray}\textcolor{gray}{0} & \cellcolor{verylightgray}\textcolor{gray}{0} & \cellcolor{verylightgray}\textcolor{gray}{0} & \cellcolor{verylightgray}0.5 & \textcolor{gray}{0} & \textcolor{gray}0 &\textcolor{gray}0 &\textcolor{gray}0 \\

\end{array} 
$$
\label{table:metricDPGA}
\end{table}
%Note that the basis vectors $\{\mathbf{w}^{*}_0,\mathbf{w}^{*}_1,\mathbf{w}^{*}_2,\mathbf{w}^{*}_3\}$ and~$\{\mathbf{w}_0,\mathbf{w}_1,\mathbf{w}_2,\mathbf{w}_3\}$

~\\ \textbf{Point of DPGA}~\\
In DPGA, the entity representing a point whose Euclidean coordinates are $(x,y,z)$ has two definitions, namely a primal and dual. Both of the definitions are the base to construct quadrics by means of sandwiching product. The definitions of the points are:
\begin{equation}
\begin{array}{cc}
\mathbf{p} &= x \mathbf{w}_0 + y \mathbf{w}_1 + z\mathbf{w}_2 + w\mathbf{w}_3 \\
\mathbf{p}^* &= x\mathbf{w}^*_0 + y\mathbf{w}^*_1 + z\mathbf{w}^*_2 + w\mathbf{w}^*_3 \\
\end{array}
\end{equation}
Note that the dual definition denotes the fact that:
\begin{equation}
\mathbf{w}_i \cdot \mathbf{w}^*_j = \displaystyle \frac{1}{2} \delta_{i,j} \text{ } \forall i,j=0,\cdots 3
\end{equation}
Where $\delta_{i,j}=1$ if $i=j$, else $0$. This corresponds to the condition of the dual stated in Section 11 of~\cite{doran1993lie}. 
~\\ \textbf{Quadrics}~\\
Again, for a general quadric defined as:
\begin{equation}
\mathrm{a} x^2+\mathrm{b} y^2 +\mathrm{c}z^2+\mathrm{d}xy+\mathrm{e}yz+\mathrm{f}zx+\mathrm{g}x+\mathrm{h} y  + \mathrm{i}z + \mathrm{j} =0
\end{equation}
A quadric in DPGA is the bivector $Q_{DPGA}$ defined as follows:
\begin{equation}
\begin{array}{c@{}l}
Q_{DPGA} =& 4\mathrm{a} \mathbf{w}_0^{*} \wedge \mathbf{w}_0 + 4\mathrm{b} \mathbf{w}_1^{*} \wedge \mathbf{w}_1 + 4\mathrm{c} \mathbf{w}_2^{*} \wedge \mathbf{w}_2 + 4\mathrm{j} \mathbf{w}_3^{*} \wedge \mathbf{w}_3 \\
&+ 2\mathrm{d}( \mathbf{w}_0^{*} \wedge \mathbf{w}_1+ \mathbf{w}_1^{*} \wedge \mathbf{w}_0) 
+ 2\mathrm{e} (\mathbf{w}_0^{*} \wedge \mathbf{w}_2 + \mathbf{w}_2^{*} \wedge \mathbf{w}_0) \\
&+ 2\mathrm{f} ( \mathbf{w}_1^{*} \wedge \mathbf{w}_2 + \mathbf{w}_2^{*} \wedge \mathbf{w}_1) + 2\mathrm{g} (\mathbf{w}_0^{*} \wedge \mathbf{w}_3+ \mathbf{w}_3^{*} \wedge \mathbf{w}_0) \\
&+ 2\mathrm{h} (\mathbf{w}_1^{*} \wedge \mathbf{w}_3 + \mathbf{w}_3^{*} \wedge \mathbf{w}_1) + 2\mathrm{i}(\mathbf{w}_2^{*} \wedge \mathbf{w}_3 + \mathbf{w}_3^{*} \wedge \mathbf{w}_2) 
\end{array}
\end{equation}

Finally, a point $(x,y,z)$ is in the quadric $\mathbf{Q}_{DPGA}$ if and only if 
\begin{equation}
\mathbf{p} \cdot \mathbf{Q}_{DPGA} \cdot \mathbf{p}^* =0
\end{equation}
Let us call $f_{DPGA} = \mathbf{p} \cdot \mathbf{Q}_{DPGA} \cdot \mathbf{p}^*$. Then to investigate numerical properties of the quadric computation, we develop the formula $ \mathbf{p} \cdot \mathbf{Q}_{DPGA} \cdot \mathbf{p}^*$:
\begin{equation}
\begin{array}{ccl}
f_{DPGA} &= &\mathbf{p} \cdot \mathbf{Q}_{DPGA} \cdot \mathbf{p}^* \\
&=& \Big( 2\mathrm{a} x \mathbf{w}_0 + \mathrm{d} x \mathbf{w}_1 + \mathrm{e} x \mathbf{w}_2 + \mathrm{g} x \mathbf{w}_3 + 2\mathrm{b} y \mathbf{w}_1 + \mathrm{d} y \mathbf{w}_0 + \mathrm{f} y \mathbf{w}_2 + \mathrm{h} y \mathbf{w}_3 \\
&&+ 2\mathrm{c} z \mathbf{w}_2 + \mathrm{e} z \mathbf{w}_0 + \mathrm{f} z \mathbf{w}_1 + \mathrm{i} z \mathbf{w}_3 
+ 2\mathrm{j} \mathbf{w}_3 + \mathrm{g}  \mathbf{w}_0 + \mathrm{h}  \mathbf{w}_1 + \mathrm{i} \mathbf{w}_2\Big) \cdot \mathbf{p}^*  \\
&=&\Big( (2\mathrm{a} x + \mathrm{d} y + \mathrm{e} z + \mathrm{g})\mathbf{w}_0 + (2\mathrm{b} y + \mathrm{d} x + \mathrm{f} z + \mathrm{h})\mathbf{w}_1 \\
&&+(2\mathrm{c} z + \mathrm{e} x + \mathrm{f} y + \mathrm{i})\mathbf{w}_2 + (\mathrm{i}z + \mathrm{g} x + \mathrm{h} y + 2\mathrm{j})\mathbf{w}_3) \Big) \cdot \mathbf{p}^*
\end{array}
\end{equation}
The last inner product results in:
\begin{equation}
\begin{array}{ccl}
f_{DPGA} &= &\mathbf{p} \cdot \mathbf{Q}_{DPGA}  \cdot \mathbf{p}^* \\
&=&\mathrm{a} x^2 + 0.5\mathrm{d} xy + 0.5\mathrm{e} xz + 0.5\mathrm{g}x + \mathrm{b} y^2 + 0.5\mathrm{d} xy + 0.5\mathrm{f} yz + 0.5\mathrm{h}y \\
&&+\mathrm{c} z^2 + 0.5\mathrm{e} xz + 0.5\mathrm{f} yz + 0.5\mathrm{i}z + 0.5\mathrm{i}z + 0.5\mathrm{g} x + 0.5\mathrm{h} y + \mathrm{j}\\
&=&\mathrm{a} x^2 + \mathrm{b} y^2 + \mathrm{c} z^2 + \mathrm{d} xy + \mathrm{e} xz + \mathrm{f} yz + \mathrm{g}x + \mathrm{h}y + \mathrm{i}z + \mathrm{j}\\
\end{array}
\end{equation}

This latter development is the base to determine the number of operations in the computation of $\mathbf{p} \cdot \mathbf{Q}_{DPGA} \cdot \mathbf{p}^* $. 
% associative property of the inner product 
%~\\ \textbf{Quadrics intersection}~\\
%The intersection of two quadrics 

~\\\textbf{Plane tangent to a quadric}~\\
In a similar way as CGA, DPGA supports the computation of the tangent plane $\mathbold{\Pi}^*$ to a quadric $\mathbf{Q}_{DPGA}$ at a dual point $\mathbf{p}^*$ as follows:
\begin{equation}
\mathbold{\Pi}^* = \mathbf{Q}_{DPGA} \cdot \mathbf{p}^*
\end{equation} 
~\\\textbf{Quadric-line intersection}~\\
DPGA also supports the quadric-line intersection. Given a line defined as the primal and dual $\mathbf{L}= \mathbf{x}_1 \wedge \mathbf{x}_2$ and $\mathbf{L}^* = \mathbf{x}_1^* \wedge \mathbf{x}_2^*$  and the quadratic surface bivector $\mathbf{Q}_{DPGA}$. The intersection bivector called $\mathbf{P}_p$ is defined as follows:
\begin{equation}
\mathbf{P}_p = (\mathbf{L}^* \wedge \mathbf{Q}_{DPGA} \wedge \mathbf{L}) \cdot \mathbf{I}
\end{equation}
where $\mathbf{I}$ is the pseudo-scalar of $\mathbb{G}_{4,4}$ defined as:
\begin{equation}
\mathbf{I} = \mathbf{w}_{0} \wedge \mathbf{w}_{1} \wedge \mathbf{w}_{2} \wedge \mathbf{w}_{3} \wedge \mathbf{w}_{0}^* \wedge \mathbf{w}_{1}^*\wedge \mathbf{w}_{2}^* \wedge \mathbf{w}_{3}^*
\end{equation}
~\\ \textbf{Complexity of some major operations of DPGA}~\\
$\mathbf{Q}_{DPGA}$ has a total of $16$ basis bivector components. For each basis bivector, $2$ inner products are performed. Thus, the first product $\mathbf{P} \cdot \mathbf{Q}_{DCGA}$ will  require $4 \times 2 \times 16=128$ inner products. As previously seen, the resulting entity is a vector with $4$ components. Hence, the second inner product requires $4 \times 4=16$ products. This results in $144$ products. 

~\\ Let us now evaluate the cost of the intersection between a quadratic surface $\mathbf{Q}_{DPGA}$ and a line $\mathbf{L}$ and $\mathbf{L}^*$. The line $\mathbf{L}^*$ is obtained by the outer product of two points $\mathbf{x}_1$ and $\mathbf{x}_2$ whose number of components is $4$. Thus a line $\mathbf{L}$ has $6$ components along the bivector basis:
\begin{equation}
(\mathbf{w}_{0} \wedge \mathbf{w}_{1},\mathbf{w}_{0} \wedge \mathbf{w}_{2},\mathbf{w}_{0} \wedge \mathbf{w}_{3},\mathbf{w}_{1} \wedge \mathbf{w}_{2},\mathbf{w}_{1} \wedge \mathbf{w}_{3},\mathbf{w}_{2} \wedge \mathbf{w}_{3})
\end{equation}
and a line $\mathbf{L}^*$ has the following bivector basis components:
\begin{equation}
(\mathbf{w}_{0}^* \wedge \mathbf{w}_{1}^*,\mathbf{w}_{0}^* \wedge \mathbf{w}_{2}^*,\mathbf{w}_{0}^* \wedge \mathbf{w}_{3}^*,\mathbf{w}_{1}^* \wedge \mathbf{w}_{2}^*,\mathbf{w}_{1}^* \wedge \mathbf{w}_{3}^*,\mathbf{w}_{2}^* \wedge \mathbf{w}_{3}^*)
\end{equation}
The number of components of the quadratic surface is $16$ and the number of components of the line is $6$. Then, the computational cost of the outer product $\mathbf{L}^* \wedge \mathbf{Q}_{DPGA}$ is $6 \times 16=96$ outer products. The result is a $4$-vector and may have components along the quad-vector basis:
\begin{equation}
\begin{array}{c}
(\mathbf{w}_0^{*} \wedge \mathbf{w}_1^{*} \wedge \mathbf{w}_2^{*} \wedge \mathbf{w}_0,\mathbf{w}_0^{*} \wedge \mathbf{w}_1^{*} \wedge \mathbf{w}_2^{*} \wedge \mathbf{w}_1,\mathbf{w}_0^{*} \wedge \mathbf{w}_1^{*} \wedge \mathbf{w}_2^{*} \wedge \mathbf{w}_2,\\
\mathbf{w}_0^{*} \wedge \mathbf{w}_1^{*} \wedge \mathbf{w}_2^{*} \wedge \mathbf{w}_3,  
\mathbf{w}_0^{*} \wedge \mathbf{w}_1^{*} \wedge \mathbf{w}_3^{*} \wedge \mathbf{w}_0,
\mathbf{w}_0^{*} \wedge \mathbf{w}_1^{*} \wedge \mathbf{w}_3^{*} \wedge \mathbf{w}_1,\\
\mathbf{w}_0^{*} \wedge \mathbf{w}_1^{*} \wedge \mathbf{w}_3^{*} \wedge \mathbf{w}_2,\mathbf{w}_0^{*} \wedge \mathbf{w}_1^{*} \wedge \mathbf{w}_3^{*} \wedge \mathbf{w}_3,
\mathbf{w}_0^{*} \wedge \mathbf{w}_2^{*} \wedge \mathbf{w}_3^{*} \wedge \mathbf{w}_0,\\
\mathbf{w}_0^{*} \wedge \mathbf{w}_2^{*} \wedge \mathbf{w}_3^{*} \wedge \mathbf{w}_1,
\mathbf{w}_0^{*} \wedge \mathbf{w}_2^{*} \wedge \mathbf{w}_3^{*} \wedge \mathbf{w}_2,
\mathbf{w}_0^{*} \wedge \mathbf{w}_2^{*} \wedge \mathbf{w}_3^{*} \wedge \mathbf{w}_3, \\
\mathbf{w}_1^{*} \wedge \mathbf{w}_2^{*} \wedge \mathbf{w}_3^{*} \wedge \mathbf{w}_0,
\mathbf{w}_1^{*} \wedge \mathbf{w}_2^{*} \wedge \mathbf{w}_3^{*} \wedge \mathbf{w}_1,
\mathbf{w}_1^{*} \wedge \mathbf{w}_2^{*} \wedge \mathbf{w}_3^{*} \wedge \mathbf{w}_2,\\
\mathbf{w}_1^{*} \wedge \mathbf{w}_2^{*} \wedge \mathbf{w}_3^{*} \wedge \mathbf{w}_3) 
\end{array}
\end{equation}
Thus, the resulting entity has $16$ components. Furthermore, the line $\mathbf{l}$ has 6 components. Hence, the cost of the final outer product is $16 \times 6=96$ outer products. Finally, the total operation cost is thus $96+96=192$ products.  

~\\ The cost of the computation of the tangent plane at a point $\mathbf{p}$ is the cost of the following product:
\begin{equation}
\mathbold{\pi}^* = \mathbf{Q}_{DPGA} \cdot \mathbf{p}^*
\end{equation} 
Considering the fact that the number of components of $\mathbf{p}^*$ is $4$ and the number of components of $\mathbf{Q}_{DPGA}$ is $16$. Then the computational cost of the computation of the tangent plane is $16 \times 4=64$ products.

\subsubsection{QCGA of $\mathbb{G}_{9,6}$}
Let us evaluate the computational cost of checking whether a point is on a quadratic surface. $\mathbf{Q}^{*}$ has a total of $12$ basis vector components. For each basis vector, at most $1$ inner product is performed, please refer to the Equation~\eqref{eq:innerProductGeneral}. Finally, the number of point component is $12$. Thus, the product $\mathbf{x} \cdot \mathbf{Q}^*$ requires at most $12 \times 12=144$ products.

~\\ The computation of the tangent plane is performed by first the computation of the normal vector as:
\begin{align}
    \mathbf{n}_{\epsilon} =& \Big(\big((\mathbf{x} \cdot \mathbf{e}_{1}) \mathbf{e}_{\infty1} + (\mathbf{x} \cdot \mathbf{e}_{2}) \mathbf{e}_{\infty4} + (\mathbf{x} \cdot \mathbf{e}_{3}) \mathbf{e}_{\infty5} +\mathbf{e}_{1}\big) \cdot \mathbf{q}^*\Big) \mathbf{e}_1 + \nonumber \\
    &\Big(\big((\mathbf{x} \cdot \mathbf{e}_{2}) \mathbf{e}_{\infty2} + (\mathbf{x} \cdot \mathbf{e}_{1}) \mathbf{e}_{\infty4} + (\mathbf{x} \cdot \mathbf{e}_{3}) \mathbf{e}_{\infty6} +\mathbf{e}_{2}\big) \cdot \mathbf{q}^*\Big) \mathbf{e}_2 + \nonumber\\
    &\Big(\big((\mathbf{x} \cdot \mathbf{e}_{3}) \mathbf{e}_{\infty3} + (\mathbf{x} \cdot \mathbf{e}_{1}) \mathbf{e}_{\infty5} + (\mathbf{x} \cdot \mathbf{e}_{2}) \mathbf{e}_{\infty6} +\mathbf{e}_{3}\big) \cdot \mathbf{q}^*\Big) \mathbf{e}_3.
\end{align} 
This computation required the inner product between a vector with $12$ components and another vector with $4$ components. This computation is repeated for each Euclidean basis vector thus the computation of the normal vector requires $3 \times 4 \times 12=144$ inner products. 

~\\Then, the tangent plane is computed using the normal vector as follows:
\begin{equation}
    \mathbold{\pi^{*}} = \mathbf{n}_{\epsilon} + \frac{1}{3} \big(\mathbf{e}_{\infty1} + \mathbf{e}_{\infty2} + \mathbf{e}_{\infty3}\big) \sqrt{-2(\mathbf{e}_{o1} + \mathbf{e}_{o2} + \mathbf{e}_{o3}) \cdot \mathbf{x}}.
\end{equation}  
This computation requires the computation of an inner product of a vector with $3$ components ($\mathbf{e}_{1},\mathbf{e}_{2},\mathbf{e}_{3}$) with a $12$ component-vector. This means $12 \times 3=36$ products. Thus the total number of inner products required in the computation of the tangent plane is $144+36=180$ products.

~\\ The final computational feature is the quadric-line intersection. In QCGA, this simply consists in computing the outer product:
\begin{equation}
\mathbf{c}^* = \mathbf{q}^* \wedge \mathbf{l}^*
\end{equation}
The number of components of $\mathbf{q}^*$ is $12$ as already seen. In QCGA, we defined a line with the $6$ Plücker coefficients as:
\begin{equation}
\mathbf{l}^* =  3 \,\mathbf{m} \, \mathbf{I}_{\epsilon} + (\mathbf{e}_{\infty 3} +\mathbf{e}_{\infty 2} +\mathbf{e}_{\infty 1}) \wedge \, \mathbf{n}\, \mathbf{I}_{\epsilon} .
\end{equation}
The number of components of both $\mathbf{m}$ and $\mathbf{n}$ is $3$. The outer product $(\mathbf{e}_{\infty 3} +\mathbf{e}_{\infty 2} +\mathbf{e}_{\infty 1}) \wedge \, \mathbf{n}\, \mathbf{I}_{\epsilon}$ yields a copy of the $3$ components of $\mathbf{n}$ along $\mathbf{e}_{\infty 1},\mathbf{e}_{\infty 2},\mathbf{e}_{\infty 3}$ basis vectors. Thus, the number of components of $\mathbf{l}^*$ is $3 \times 3 + 3 = 12$. Finally, the cost of the outer product between $\mathbf{q}^*$ and $\mathbf{l}^*$ is $12 \times 12=144$ products.  

~\\ The table~\ref{tab:comparisonModels} summarizes the computational features of the proposed framework compared to DPGA and DCGA.
\begin{table}[ht]
\caption{Comparison of the computational features between QCGA, DCGA, DPGA in number of Geometric Algebra operations}
\label{tab:comparisonModels}
\begin{center}
\renewcommand{\arraystretch}{1.3}
\begin{tabular}{|c|c|c|c|}
\hline
Feature & DPGA & DCGA & \textbf{QCGA} \\
\hline
point is on a quadric & $\mathbf{144}$ & $750$ & $\mathbf{144}$ \\
\hline
tangent plane & $\mathbf{64}$ & $541$ & $180$ \\
\hline
quadric-line intersection & $192$ & $300$ & $\mathbf{144}$ \\
\hline
\end{tabular}
\end{center}
\end{table}
We remark that the computation of the tangent plane is more efficient using DPGA whereas the intersection between a quadratic surface and a line requires less computations using QCGA. Furthermore, some versors are not defined in some models.

%%%%%%%%%%%%%%%%%%%%%%%%%%%%%%%%%%%%%%%%%%%%%%%%%%%%%%%%%%%%%%%%%%%%%%%%%%%%%%%%
\section{Mapping}
As a practical application, it might be interesting to construct a quadratic surface from 9 points then rotating this quadratic surface. For the moment, QCGA is the only approach, in Geometric Algebra, that can construct quadratic surface from 9 points but it does not yet support all the transformations. Furthermore, the last chapter shows that some operations are worth doing in a certain framework. These points are our motivation for defining new operators that convert quadratic surfaces between the three presented frameworks. 

~\\The key idea is that for any entities representing quadric surface in QCGA,DCGA and DPGA, it is possible to convert the entity such that all the coefficients of the quadrics:
\begin{equation}
\mathrm{a} x^2+\mathrm{b} y^2 +\mathrm{c}z^2+\mathrm{d}xy+\mathrm{e}yz+\mathrm{f}zx+\mathrm{g}x+\mathrm{h} y  + \mathrm{i}z + \mathrm{j} =0,
\label{eq:genralQuadrics}
\end{equation} 
can be extracted easily.

\section{DCGA reciprocal operators}
This means defining reciprocal operators for DCGA:
\begin{equation}
\begin{array}{c@{}ll@{}l}
\mathbf{T}^{x^{2}} &= \mathbf{e}_1 \wedge \mathbf{e}_4 & \mathbf{T}_{y^{2}} &= \mathbf{e}_2 \wedge \mathbf{e}_5\\
\mathbf{T}^{z^{2}} &= \mathbf{e}_3 \wedge \mathbf{e}_6 & \mathbf{T}_{1} &= \mathbf{e}_{o1} \wedge \mathbf{e}_{o2} \\
\end{array}
\end{equation}
along with the $6$ following:
\begin{equation}
\begin{array}{cl}
\mathbf{T}^{x} &= \Big( \mathbf{e}_1 \wedge \mathbf{e}_{o 2} + \mathbf{e}_{o 1} \wedge \mathbf{e}_4 \Big)\\
 \mathbf{T}^{y} &=  \Big( \mathbf{e}_2 \wedge \mathbf{e}_{o 2} + \mathbf{e}_{o 1} \wedge \mathbf{e}_5 \Big) \\
\mathbf{T}^{z} &= \Big( \mathbf{e}_3 \wedge \mathbf{e}_{o 2} + \mathbf{e}_{o 1} \wedge \mathbf{e}_6 \Big) \\ 
\mathbf{T}^{xy} &= \Big(\mathbf{e}_1 \wedge \mathbf{e}_5 + \mathbf{e}_2 \wedge \mathbf{e}_4 \Big)\\
\mathbf{T}^{xz} &= \Big(\mathbf{e}_1 \wedge \mathbf{e}_6 + \mathbf{e}_3 \wedge \mathbf{e}_4 \Big)\\
\mathbf{T}^{yz} &= \Big(\mathbf{e}_3 \wedge \mathbf{e}_5 + \mathbf{e}_2 \wedge \mathbf{e}_6 \Big)\\
\end{array}
\end{equation}

These reciprocal operators verify the following properties:
\begin{equation}
\begin{array}{l@{}l@{}l@{}l@{}l@{}l@{}}
\mathbf{T}^{x^2} \cdot \mathbf{T}_{x^2} &= 1,~
\mathbf{T}^{y^2} \cdot \mathbf{T}_{y^2} &= 1,~
\mathbf{T}^{z^2} \cdot \mathbf{T}_{z^2} &= 1,~
\mathbf{T}^{xy} \cdot \mathbf{T}_{xy}   &= 1,~
\mathbf{T}^{xz} \cdot \mathbf{T}_{xz}   &= 1,\\
\mathbf{T}^{yz} \cdot \mathbf{T}_{yz}   &= 1,~
\mathbf{T}^{x} \cdot \mathbf{T}_{x}     &= 1,~
\mathbf{T}^{y} \cdot \mathbf{T}_{y}     &= 1,~
\mathbf{T}^{z} \cdot \mathbf{T}_{z}     &= 1,~
\mathbf{T}^{1} \cdot \mathbf{T}_{1}     &= 1\\
\end{array}
\end{equation}
Then, given $\mathbf{q}_{DCGA}$ the entity representing a quadratic surface of DCGA, any coefficients of this quadratic surface~\eqref{eq:genralQuadrics} can be extracted as:
\begin{equation}
\begin{array}{l@{}l@{}l@{}l@{}l@{}l@{}}
\mathbf{T}^{x^2} \cdot \mathbf{q}_{DCGA} &= \mathrm{a},~
\mathbf{T}^{y^2} \cdot \mathbf{q}_{DCGA} &= \mathrm{b},~
\mathbf{T}^{z^2} \cdot \mathbf{q}_{DCGA} &= \mathrm{c},~
\mathbf{T}^{xy} \cdot \mathbf{q}_{DCGA}   &= \mathrm{d},\\
\mathbf{T}^{xz} \cdot \mathbf{q}_{DCGA}   &= \mathrm{e},
\mathbf{T}^{yz} \cdot \mathbf{q}_{DCGA}  &= \mathrm{f},~
\mathbf{T}^{x} \cdot \mathbf{q}_{DCGA}    &= \mathrm{g},~
\mathbf{T}^{y} \cdot \mathbf{q}_{DCGA}     &= \mathrm{h},\\
\mathbf{T}^{z} \cdot \mathbf{q}_{DCGA}     &= \mathrm{i},~
\mathbf{T}^{1} \cdot \mathbf{q}_{DCGA}    &= \mathrm{j}\\
\end{array}
\end{equation}
The coefficients of the implicit form of the surface can then be extracted as:
\begin{equation}
\begin{array}{@{}lll}

\mathrm{a} = \mathbf{T}^{x^2} \cdot \mathbf{q}_{DCGA} & 
\mathrm{b} = \mathbf{T}^{y^2} \cdot \mathbf{q}_{DCGA} &
\mathrm{c} = \mathbf{T}^{z^2} \cdot \mathbf{q}_{DCGA}\\

\mathrm{d} = \mathbf{T}^{xy} \cdot \mathbf{q}_{DCGA} &
\mathrm{e} = \mathbf{T}^{xz} \cdot \mathbf{q}_{DCGA} &
\mathrm{f} = \mathbf{T}^{yz} \cdot \mathbf{q}_{DCGA}\\

\mathrm{g} = \mathbf{T}^{x} \cdot \mathbf{q}_{DCGA} &
\mathrm{h} = \mathbf{T}^{y} \cdot \mathbf{q}_{DCGA} &
\mathrm{i} = \mathbf{T}^{z} \cdot \mathbf{q}_{DCGA} \\
~\mathrm{j} = \mathbf{T}^{1} \cdot \mathbf{q}_{DCGA} \\
\end{array}
\end{equation}
The construction of a DCGA point was already explained in the previous section and defined in~\cite{easter20162CGA}. 
The reciprocal operation requires the computation of the normalization point $\mathbf{\hat{x}}$ of DCGA that we define as: 
\begin{equation}
\mathbf{\hat{x}} = \frac{\mathbf{x}}{\mathbf{x} \cdot (\mathbf{e}_{\infty1} \wedge  \mathbf{e}_{\infty2})} 
\end{equation}
The extraction of the Euclidean components $(x,y,z)$ of a normalized point $\mathbf{\hat{x}}$ of DCGA can be performed as follows:
\begin{equation}
x = \mathbf{\hat{x}} \cdot \mathbf{e}_1, ~~~~~ 
y = \mathbf{\hat{x}} \cdot \mathbf{e}_2, ~~~~~
z = \mathbf{\hat{x}} \cdot \mathbf{e}_3
\end{equation}

\section{DPGA reciprocal operators}
In a similar way, let us note $\mathbf{W}$ reciprocal operators for DPGA
\begin{equation}
\begin{array}{c@{}lc@{}l}
\mathbf{W}^{x^2} &=  \mathbf{w}_{0}^{*} \wedge \mathbf{w}_{0} & 
\mathbf{W}^{y^2} &=  \mathbf{w}_{1}^{*} \wedge \mathbf{w}_{1} \\
\mathbf{W}^{z^2} &=  \mathbf{w}_{2}^{*} \wedge \mathbf{w}_{2} & 
\mathbf{W}^{xy}  &= 2\mathbf{w}_{1}^{*} \wedge \mathbf{w}_{0} \\ 
\mathbf{W}^{xz}  &= 2\mathbf{w}_{2}^{*} \wedge \mathbf{w}_{0} & 
\mathbf{W}^{yz}  &= 2\mathbf{w}_{2}^{*} \wedge \mathbf{w}_{1} \\ 
\mathbf{W}^{x}   &= 2\mathbf{w}_{3}^{*} \wedge \mathbf{w}_{0} & 
\mathbf{W}^{y}   &= 2\mathbf{w}_{3}^{*} \wedge \mathbf{w}_{1} \\ 
\mathbf{W}^{z}   &= 2\mathbf{w}_{3}^{*} \wedge \mathbf{w}_{2} &
\mathbf{W}^{1}   &=  \mathbf{W}_{3}^{*} \wedge \mathbf{w}_{3} \\ 
\end{array}
\end{equation}
Again, the following properties hold:
\begin{equation}
\begin{array}{l@{}l@{}l@{}l@{}l@{}l@{}}
\mathbf{W}^{x^2} \cdot \mathbf{W}_{x^2} &= 1,~
\mathbf{W}^{y^2} \cdot \mathbf{W}_{y^2} &= 1,~
\mathbf{W}^{z^2} \cdot \mathbf{W}_{z^2} &= 1,~
\mathbf{W}^{xy} \cdot \mathbf{W}_{xy}   &= 1,~
\mathbf{W}^{xz} \cdot \mathbf{W}_{xz}   &= 1,\\
\mathbf{W}^{yz} \cdot \mathbf{W}_{yz}   &= 1,~
\mathbf{W}^{x} \cdot \mathbf{W}_{x}     &= 1,~
\mathbf{W}^{y} \cdot \mathbf{W}_{y}     &= 1,~
\mathbf{W}^{z} \cdot \mathbf{W}_{z}     &= 1,~
\mathbf{W}^{1} \cdot \mathbf{W}_{1}     &= 1\\
\end{array}
\end{equation}
Then, given $\mathbf{q}_{DPGA}$ the entity representing a quadratic of DPGA, any coefficients of this quadratic surface~\eqref{eq:genralQuadrics} can be extracted as:
\begin{equation}
\begin{array}{l@{}l@{}l@{}l@{}l@{}l@{}}
\mathbf{W}^{x^2} \cdot \mathbf{q}_{DPGA} &= \mathrm{a},~
\mathbf{W}^{y^2} \cdot \mathbf{q}_{DPGA} &= \mathrm{b},~
\mathbf{W}^{z^2} \cdot \mathbf{q}_{DPGA} &= \mathrm{c},~
\mathbf{W}^{xy} \cdot \mathbf{q}_{DPGA}   &= \mathrm{d},\\
\mathbf{W}^{xz} \cdot \mathbf{q}_{DPGA}   &= \mathrm{e},~
\mathbf{W}^{yz} \cdot \mathbf{q}_{DPGA}  &= \mathrm{f},~
\mathbf{W}^{x} \cdot \mathbf{q}_{DPGA}    &= \mathrm{g},~
\mathbf{W}^{y} \cdot \mathbf{q}_{DPGA}     &= \mathrm{h},\\
\mathbf{W}^{z} \cdot \mathbf{q}_{DPGA}     &= \mathrm{i},~
\mathbf{W}^{1} \cdot \mathbf{q}_{DPGA}    &= \mathrm{j}\\
\end{array}
\end{equation}
In a similar way as in projective geometry, the construction of a finite point of DPGA requires to add a homogeneous component 1 to the Euclidean components. The normalization of a point merely consists in dividing all the components by its $\mathbf{w}_{3}$ components(or $\mathbf{w}_{3}^*$ for the dual form) if it is a non-zero component.

\section{QCGA reciprocal operators}
For QCGA, quadratic surfaces can be either representing using the primal form or the dual form. We define the reciprocal operators for the dual form. Indeed, if one consider the primal form, then this would consist in computing the dual of the primal and apply the following reciprocal operators:
\begin{equation}
\begin{array}{c@{}lc@{}l}
\mathbf{Q}^{x^2} &=  \frac{1}{2} \mathbf{e}_{\infty1}& 
\mathbf{Q}^{y^2} &=  \frac{1}{2} \mathbf{e}_{\infty2} \\
\mathbf{Q}^{z^2} &=  \frac{1}{2} \mathbf{e}_{\infty3} & 
\mathbf{Q}^{xy}  &= \mathbf{e}_{\infty4} \\ 
\mathbf{Q}^{xz}  &= \mathbf{e}_{\infty5} & 
\mathbf{Q}^{yz}  &= \mathbf{e}_{\infty6} \\ 
\mathbf{Q}^{x}   &= \mathbf{e}_{1} & 
\mathbf{Q}^{y}   &= \mathbf{e}_{2} \\ 
\mathbf{Q}^{z}   &= \mathbf{e}_{3} &
\mathbf{Q}^{1}   &= \mathbf{e}_{o1}+\mathbf{e}_{o2}+\mathbf{e}_{o3} \\ 
\end{array}
\end{equation}
Given a general quadratic $\mathbf{q}^*$ whose coefficients are $(\mathrm{a},\mathrm{b},\mathrm{c},\cdots,\mathrm{j})$, the properties of these operators are as follows:
\begin{equation}
\begin{array}{l@{}l@{}l@{}l@{}l@{}l@{}}
\mathbf{Q}^{x^2} \cdot \mathbf{q}^{*} &= \mathrm{a},~
\mathbf{Q}^{y^2} \cdot \mathbf{q}^{*} &= \mathrm{b},~
\mathbf{Q}^{z^2} \cdot \mathbf{q}^{*} &= \mathrm{c},~
\mathbf{Q}^{xy} \cdot \mathbf{q}^{*}   &= \mathrm{d},~
\mathbf{Q}^{xz} \cdot \mathbf{q}^{*}   &= \mathrm{e},\\
\mathbf{Q}^{yz} \cdot \mathbf{q}^{*}   &= \mathrm{f},~
\mathbf{Q}^{x} \cdot \mathbf{q}^{*}    &= \mathrm{g},~
\mathbf{Q}^{y} \cdot \mathbf{q}^{*}     &= \mathrm{h},~
\mathbf{Q}^{z} \cdot \mathbf{q}^{*}     &= \mathrm{i},~
\mathbf{Q}^{1} \cdot \mathbf{q}^{*}     &= \mathrm{j}\\
\end{array}
\end{equation}
The construction of a QCGA point was already explained in the previous section. 
The reciprocal operation requires the computation of the normalization point $\mathbf{\hat{x}}$ of QCGA.
\begin{prop}
For QCGA point $\mathbf{x}$, the normalization is merely computed through an averaging of $\mathbf{e}_{o1},\mathbf{e}_{o2},\mathbf{e}_{o3}$ components thus of $\mathbf{e}_{o}$ component, namely as:   
\begin{equation}
-\frac{\mathbf{x}}{\mathbf{x} \cdot \mathbf{e}_{\infty}} 
\end{equation}
\end{prop}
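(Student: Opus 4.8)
The statement lives in a homogeneous model: a QCGA point and every non-zero scalar multiple of it represent the same geometric point, and \emph{normalizing} means selecting the distinguished representative whose origin block has a prescribed weight (the representative required before the Euclidean data is read off, exactly as for DCGA). My plan is: (i) write a QCGA point in its block decomposition; (ii) compute the scalar $\mathbf{x}\cdot\mathbf{e}_{\infty}$ directly from the QCGA metric and recognise it as $-1$ times the sum of the coefficients of $\mathbf{e}_{o1},\mathbf{e}_{o2},\mathbf{e}_{o3}$; (iii) deduce that $-\mathbf{x}/(\mathbf{x}\cdot\mathbf{e}_{\infty})$ is independent of the chosen representative and has origin block $\tfrac13(\mathbf{e}_{o1}+\mathbf{e}_{o2}+\mathbf{e}_{o3})$, i.e.\ the average of the three origin vectors.

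\noindent\textbf{The key inner product.} Recall that a QCGA point is a $1$-vector with $12$ components, organised in an \emph{origin block} on $\mathbf{e}_{o1},\mathbf{e}_{o2},\mathbf{e}_{o3}$, a \emph{Euclidean block} on $\mathbf{e}_{1},\mathbf{e}_{2},\mathbf{e}_{3}$, and an \emph{infinity block} on $\mathbf{e}_{\infty1},\dots,\mathbf{e}_{\infty6}$; for the point of coordinates $(x,y,z)$ the origin block is $\mathbf{e}_{o1}+\mathbf{e}_{o2}+\mathbf{e}_{o3}$, the Euclidean block is $x\mathbf{e}_{1}+y\mathbf{e}_{2}+z\mathbf{e}_{3}$, and the infinity block is a quadratic expression in $x,y,z$. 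I would then expand $\mathbf{x}\cdot\mathbf{e}_{\infty}=\mathbf{x}\cdot(\mathbf{e}_{\infty1}+\mathbf{e}_{\infty2}+\mathbf{e}_{\infty3})$ block by block: in the QCGA metric one has $\mathbf{e}_{i}\cdot\mathbf{e}_{\infty j}=0$ and $\mathbf{e}_{\infty i}\cdot\mathbf{e}_{\infty j}=0$, so the Euclidean block and the entire infinity block contribute nothing (the components $\mathbf{e}_{\infty4},\mathbf{e}_{\infty5},\mathbf{e}_{\infty6}$ are not even paired with any summand of $\mathbf{e}_{\infty}$), and only the origin block survives. Since $\mathbf{e}_{oi}\cdot\mathbf{e}_{\infty j}=-\delta_{i,j}$, the result is minus the sum of the coefficients of $\mathbf{e}_{o1},\mathbf{e}_{o2},\mathbf{e}_{o3}$; for an arbitrary representative $\lambda\mathbf{x}$ these three coefficients are all equal to $\lambda$, hence
\begin{equation}
(\lambda\mathbf{x})\cdot\mathbf{e}_{\infty}=-3\lambda .
\end{equation}

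\noindent\textbf{Conclusion, and the delicate point.} The displayed identity gives $-\,(\lambda\mathbf{x})\big/\big((\lambda\mathbf{x})\cdot\mathbf{e}_{\infty}\big)=-\,(\lambda\mathbf{x})/(-3\lambda)=\tfrac13\mathbf{x}$ for every non-zero $\lambda$, so the map is well defined on the homogeneous class and its value has origin block $\tfrac13(\mathbf{e}_{o1}+\mathbf{e}_{o2}+\mathbf{e}_{o3})$ -- the average of the three origin vectors, independently of the scaling of the input, which is precisely the averaging of $\mathbf{e}_{o1},\mathbf{e}_{o2},\mathbf{e}_{o3}$ (hence of $\mathbf{e}_{o}$) announced in the statement; in particular the normalization reduces, notation aside, to the classical CGA rule $-\mathbf{x}/(\mathbf{x}\cdot\mathbf{e}_{\infty})$, the passage from one null pair to three matched pairs being exactly what turns the $\mathbf{e}_{o}$ coefficient into an average. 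The only genuine work is the metric bookkeeping of the preceding paragraph: one must check that \emph{all nine} non-origin components of the point are annihilated by the inner product with $\mathbf{e}_{\infty}$, and that each of the three surviving origin contributions carries the sign $-1$, so that the leading minus sign in the formula restores a positive origin weight. Once that is verified the rest is immediate.
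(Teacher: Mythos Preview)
Your argument is essentially the paper's: write the (possibly scaled) point, pair it with the infinity vector(s), use the QCGA metric to see that the Euclidean and infinity blocks die while each $\mathbf{e}_{oi}$ contributes $-1$, and conclude that dividing by the resulting scalar strips off the scale. The paper does exactly this, computing $\alpha\mathbf{x}\cdot\mathbf{e}_{\infty i}=-\alpha$ for $i=1,2,3$ separately and then checking that the quotient satisfies the normalization condition $\hat{\mathbf{x}}\cdot\mathbf{e}_{\infty i}=-1$.

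One point to correct: you take $\mathbf{e}_{\infty}=\mathbf{e}_{\infty1}+\mathbf{e}_{\infty2}+\mathbf{e}_{\infty3}$ and therefore land on $\tfrac13\mathbf{x}$, with origin block $\tfrac13(\mathbf{e}_{o1}+\mathbf{e}_{o2}+\mathbf{e}_{o3})$. But the normalized QCGA point must have origin coefficients equal to $1$ (equivalently $\hat{\mathbf{x}}\cdot\mathbf{e}_{\infty i}=-1$), otherwise the subsequent extraction $x=\hat{\mathbf{x}}\cdot\mathbf{e}_1$ returns $x/3$ rather than $x$. The paper's proof signals this by carrying a factor $-3$ in the numerator, i.e.\ it uses $\mathbf{e}_{\infty}=\tfrac13(\mathbf{e}_{\infty1}+\mathbf{e}_{\infty2}+\mathbf{e}_{\infty3})$; that is precisely the ``averaging'' alluded to in the statement. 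With that convention your computation gives $(\lambda\mathbf{x})\cdot\mathbf{e}_{\infty}=-\lambda$ and the quotient is $\mathbf{x}$ itself, as required. So your strategy is right, but the interpretation of the average belongs in the definition of $\mathbf{e}_{\infty}$, not in the output.
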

\begin{proof}
A scale $\alpha$ on $\mathbf{x}$ acts the same way on all null basis vectors of $\mathbf{x}$:  
\begin{equation}
\begin{array}{r}
\alpha \mathbf{x} = \alpha\mathbf{x}_\epsilon  
+ \tfrac{1}{2} \alpha (x^2 \mathbf{e}_{\infty 1} +  y^2 \mathrm{w} \mathbf{e}_{\infty 2} + z^2 \mathbf{e}_{\infty 3}) 
 + xy \alpha \mathbf{e}_{\infty 4} + xz \alpha \mathbf{e}_{\infty 5} + yz \alpha \mathbf{e}_{\infty 6} \\
  + \alpha \mathbf{e}_{o1}+ \alpha \mathbf{e}_{o2} + \alpha \mathbf{e}_{o3}
\end{array}
\end{equation}
The metric of QCGA indicates (see Table~\ref{table:metric}):
\begin{equation}
\begin{array}{cl}
\alpha \mathbf{x} \cdot \mathbf{e}_{\infty 1} &= -\alpha \\ 
\alpha \mathbf{x} \cdot \mathbf{e}_{\infty 2} &= -\alpha \\ 
\alpha \mathbf{x} \cdot \mathbf{e}_{\infty 3} &= -\alpha \\ 
\end{array}
\end{equation}
Thus, if $\alpha\neq 0$ : 
\begin{equation}
\begin{array}{cl}
 \displaystyle \frac{-3 \alpha \mathbf{x}}{\alpha \mathbf{x} \cdot (\mathbf{e}_{\infty 1}+\mathbf{e}_{\infty 2}+\mathbf{e}_{\infty 3}) } \cdot \mathbf{e}_{\infty 1} &= - \frac{-3 \alpha \mathbf{x}}{-3 \alpha } \cdot \mathbf{e}_{\infty 1} \\ 
&= \mathbf{x} \cdot \mathbf{e}_{\infty 1} = -1
\end{array}
\end{equation}
A similar result is obtained with $\mathbf{e}_{\infty2}$ and $\mathbf{e}_{\infty3}$:
\begin{equation}
\begin{array}{cl}
\displaystyle  \frac{-3 \alpha \mathbf{x}}{\alpha\mathbf{x} \cdot (\mathbf{e}_{\infty 1}+\mathbf{e}_{\infty 2}+\mathbf{e}_{\infty 3}) } \cdot \mathbf{e}_{\infty 2} &= \mathbf{x} \cdot \mathbf{e}_{\infty 2} = -1
\end{array}
\end{equation}
\begin{equation}
\begin{array}{cl}
\displaystyle  \frac{-3 \alpha \mathbf{x}}{\alpha\mathbf{x} \cdot (\mathbf{e}_{\infty 1}+\mathbf{e}_{\infty 2}+\mathbf{e}_{\infty 3}) } \cdot \mathbf{e}_{\infty 3} &= \mathbf{x} \cdot \mathbf{e}_{\infty 3} = -1
\end{array}
\end{equation}
\end{proof}
Thus, we checked that for any scaled points $\mathbf{x}_1,\mathbf{x}_2$:
\begin{equation}
\displaystyle \frac{\mathbf{x}_1}{\mathbf{x}_1 \cdot \mathbf{e}_{\infty}} \cdot \frac{\mathbf{x}_2}{\mathbf{x}_2 \cdot \mathbf{e}_{\infty}} = - \displaystyle \frac{1}{2} \left\| \mathbf{x}_{1\epsilon}-\mathbf{x}_{2\epsilon} \right\|^2 
\end{equation} 
The extraction of the Euclidean components $(x,y,z)$ of a normalized point $\mathbf{\hat{x}}$ of DCGA can be performed as follows:
\begin{equation}
x = \mathbf{\hat{x}} \cdot \mathbf{e}_1, ~~~~~ 
y = \mathbf{\hat{x}} \cdot \mathbf{e}_2, ~~~~~
z = \mathbf{\hat{x}} \cdot \mathbf{e}_3
\end{equation}

\section{Test}
We tested this approach by defining an ellipsoid from 9 points using QCGA. Then we rotate it using DPGA and back-convert the rotated ellipsoid into QCGA framework. In terms of Geometric Algebra computations, first we compute the quadratic:
\begin{equation}
\mathbf{q}^* = (\mathbf{x}_1 \wedge \mathbf{x}_2 \wedge  \cdots \wedge \mathbf{x}_9 \wedge \mathbf{I}_{o}^{ \rhd } )^*
\end{equation}
Then, we apply the extraction operators of QCGA to convert the QCGA quadratic to DPGA quadratic.
\begin{equation}
\begin{array}{c@{}l}
\mathbf{q}_{DPGA} =& 4 (\mathbf{Q}^{x^2}\cdot \mathbf{q}^*) \mathbf{w}_0^{*} \wedge \mathbf{w}_0 + 4(\mathbf{Q}^{y^2}\cdot \mathbf{q}^*) \mathbf{w}_1^{*} \wedge \mathbf{w}_1 + 4(\mathbf{Q}^{z^2}\cdot \mathbf{q}^*)  \mathbf{w}_2^{*} \wedge \mathbf{w}_2 \\
& + 4 (\mathbf{Q}^{1}\cdot \mathbf{q}^*) \mathbf{w}_3^{*} \wedge \mathbf{w}_3 \\
&+ 2 (\mathbf{Q}^{xy}\cdot \mathbf{q}^*) ( \mathbf{w}_0^{*} \wedge \mathbf{w}_1+ \mathbf{w}_1^{*} \wedge \mathbf{w}_0) \\
&+ 2(\mathbf{Q}^{xz}\cdot \mathbf{q}^*) (\mathbf{w}_0^{*} \wedge \mathbf{w}_2 + \mathbf{w}_2^{*} \wedge \mathbf{w}_0)\\
& + 2 (\mathbf{Q}^{yz}\cdot \mathbf{q}^*) ( \mathbf{w}_1^{*} \wedge \mathbf{w}_2 + \mathbf{w}_2^{*} \wedge \mathbf{w}_1) \\
&+ 2(\mathbf{Q}^{x}\cdot \mathbf{q}^*) (\mathbf{w}_0^{*} \wedge \mathbf{w}_3+ \mathbf{w}_3^{*} \wedge \mathbf{w}_0) \\ 
&+ 2(\mathbf{Q}^{y}\cdot \mathbf{q}^*) (\mathbf{w}_1^{*} \wedge \mathbf{w}_3 + \mathbf{w}_3^{*} \wedge \mathbf{w}_1) \\
&+ 2(\mathbf{Q}^{z}\cdot \mathbf{q}^*)(\mathbf{w}_2^{*} \wedge \mathbf{w}_3 + \mathbf{w}_3^{*} \wedge \mathbf{w}_2) 
\end{array}
\end{equation}
The rotation is now performed as follows:
\begin{equation}
\mathbf{q}_{DPGA} = \mathbf{R} \mathbf{q}_{DPGA} \mathbf{R}^{-1} 
\end{equation}
The rotor $\mathbf{R}$ is defined as:
\begin{equation}
\mathbf{R} = \exp( \frac{1}{2} \theta \mathbf{w}_{i}\mathbf{w}_{j}^*)
\end{equation}
where $i \neq j$ 

The final step is to back-convert the resulting quadric to the QCGA framework. It is merely computed using the QCGA extraction operators as follows:
\begin{align}
  \mathbf{q}^* &= -\big(2 (\mathbf{W}^{x^2} \cdot \mathbf{q}_{DPGA} ) \mathbf{e}_{o1}
         +2 (\mathbf{W}^{y^2} \cdot \mathbf{q}_{DPGA} )  \mathbf{e}_{o2}
         +2 (\mathbf{W}^{z^2} \cdot \mathbf{q}_{DPGA} )  \mathbf{e}_{o3}
         \nonumber \\
  &\phantom{==}
         + (\mathbf{W}^{xy} \cdot \mathbf{q}_{DPGA} )  \mathbf{e}_{o4}  		+(\mathbf{W}^{xz} \cdot \mathbf{q}_{DPGA} )  \mathbf{e}_{o5}
         + (\mathbf{W}^{yz} \cdot \mathbf{q}_{DPGA} )  \mathbf{e}_{o6}\big)
  \nonumber \\
  &\phantom{==}
         +\big( (\mathbf{W}^{x} \cdot \mathbf{q}_{DPGA} ) \mathbf{e}_{1} + (\mathbf{W}^{y} \cdot \mathbf{q}_{DPGA} ) \mathbf{e}_{2} + (\mathbf{W}^{z} \cdot \mathbf{q}_{DPGA} ) \mathbf{e}_{3} \big)  \nonumber \\
  &\phantom{==}
         -\frac{(\mathbf{W}^{1} \cdot \mathbf{q}_{DPGA} )}{3}(\mathbf{e}_{\infty 1}+\mathbf{e}_{\infty 2}+\mathbf{e}_{\infty 3}).
  \label{eq:dualq}      
\end{align}
Note that the program can be found in the plugin folder of the git repository previously shown.

%%%%%%%%%%%%%%%%%%%%%
\section{Conclusion}
 In this paper, we focused on a hybrid approach to deal with quadratic surfaces. After presenting the main approaches to represent and manipulate quadratic surfaces, we introduced a new hybrid Geometric Algebra approach. This approach unifies all the models of Geometric Algebra into one more general approach that allows to represent any quadratic surface either using control points or from the coefficients of its implicit form. We showed that the proposed method also enables to easily extract geometric properties, like the curvature, in an efficient way. For the following, we seek for a generalisation of this approach for the representation of quadratic and cubic surfaces. Some frameworks are considered and all of these models require high dimensional frameworks.

% BIBLIO
\bibliographystyle{acm}
\bibliography{GAbiblio} 

\end{document}